\newcommand{\E}{\mathbb{E}}
\newcommand{\Q}{\mathbb{Q}}
\newcommand{\cov}{\mathbb{C}\mathrm{ov}}
\newcommand{\mc}{\mathrm{mc}}
\DeclareMathOperator{\tr}{tr}
\DeclareMathOperator{\rank}{rank}
\newtheorem{thm}{Theorem}[section]
\newtheorem*{thm*}{Theorem}
\theoremstyle{remark}
\newtheorem*{rem*}{Remark}
\renewcommand{\theequation}{\thesection.\arabic{equation}}
\title{Efficient Least Squares Monte-Carlo Technique for PFE/EE Calculations}
\author{ 
        \href{https://orcid.org/0000-0001-6204-5819}{\includegraphics[scale=0.06]{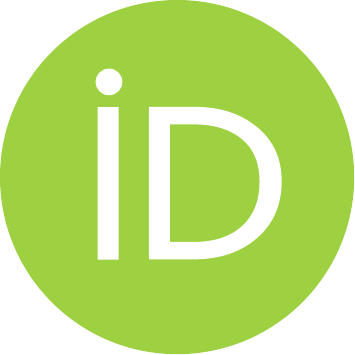}\hspace{1mm}Yuriy Krepkiy}\thanks{The results contained in this paper were obtained when the author was working for IBM Canada.}\\
	RBC Capital Markets\\
	Toronto, Canada \\
	\texttt{Yuriy.Krepkiy@rbccm.com} \\
	\And
        \href{https://orcid.org/0000-0002-3509-230X}{\includegraphics[scale=0.06]{orcid.pdf}\hspace{1mm}Asif Lakhany} \\
	SS\&C Algorithmics\\
	Toronto, Canada \\
	\texttt{Asif.Lakhany@sscinc.com} \\
	\And
        \href{https://orcid.org/0000-0003-0247-8107}{\includegraphics[scale=0.06]{orcid.pdf}\hspace{1mm}Amber Zhang}\\
	SS\&C Algorithmics\\
	Toronto, Canada \\
	\texttt{Amber.Zhang@sscinc.com} \\
	}
\begin{document}
\maketitle

\begin{abstract}
We describe a regression-based method, generally referred to as the Least Squares Monte Carlo (LSMC) method, to speed up exposure calculations of a portfolio. We assume that the portfolio contains several exotic derivatives that are priced using Monte-Carlo on each real world scenario and time step. Such a setting is often referred to as a Monte Carlo over a Monte Carlo or a Nested Monte Carlo method.
\end{abstract}

\keywords{American Monte-Carlo \and Least-Squares Monte-Carlo \and AMC \and LSMC \and Nested Simulation}

\section{Introduction} 
\label{sec:Introduction} 
Least Squares Monte-Carlo (LSMC) is a technique based on least squares
regression, which we describe in this paper. We think of LSMC as a special case
of a larger class of methods that are referred to as American Monte-Carlo in the
literature \cite{ghamami_zhang_eff_amc}, \cite{sokol_xva}. The term AMC has
its origins in the work of Longstaff and Schwartz \cite{amc_shwartz}. Therein, the
authors describe a method to price American Options that relies on building a
conditional expectation function using a least squares regression technique
over a set of explanatory variables.  In the simplest case, the set of
explanatory variables would include the current state of the underlying risk
factors. In its original form, the method uses two sets of Monte-Carlo
simulations. One simulation is used to build the conditional expectation
function by regressing over the stock price and the indicator that the option is
in the money using backward propagation of state variables. Once the backward
pass is done, a different set of Monte-Carlo paths are used to move forward to
price the instrument. During the forward pass, the already constructed
conditional expectation functions are used at every observation date to
determine whether it is optimal to exercise or not.

In its modern usage, in the context of risk management, the term AMC (or LSMC)
is used to handle the Nested Monte-Carlo (NMC). As an illustration, consider
Potential Future Exposure (PFE) or Expected Exposure (EE) for a
portfolio consisting of exotic instruments.\footnote{For detailed example of
    PFE and EE see \cite{xva_ruiz15}.} 
To estimate an EE and PFE profiles, we value the portfolio on a set of future
market, or ``outer'', scenarios generated across time. Suppose we use 5,000 such
scenarios, and use 5,000 risk-neutral, ``inner'', paths to obtain a single
price estimate on each outer scenario. Under this set-up, the computational cost for each exotic
instrument will be proportional to $5,000 \times 5,000 = 25,000,000$ on each
time step. Clearly, we need to apply some clever techniques to reduce the
computation cost. 

One such solution has been proposed by Barrie and Hibbert
\cite{barrie_hibbert11}.  Berrie and Hibbert's method reduces the overall
computational cost by decreasing the number of inner paths.  Their method aims
at reducing Monte-Carlo errors by regressing the estimated prices against a set
of explanatory variables generated in the outer loop. We refer to price
estimates obtained using this method as LSMC price estimates. To illustrate the
LSMC method, consider a vanilla call option under Geometric Brownian Motion
with constant drift and volatility. We use a GBM model in the outer loop to
generate a set of underlying stock prices and use GBM model within the inner
loop to price the option. Assume that the option matures in one year (time step
360), and we are interested in computing prices at time steps 15, 30, \dots,
360. On each time step, in a typical NMC setting, we use 5,000 inner paths to
obtain MC price estimates. Under the LSMC setting, we use 30 inner paths, for
example, per outer scenario to obtain a set of MC prices on a given time step.
When these prices are plotted against 5,000 underlying (outer) spot values on
a given time step, we expect to see some relationship. We assume that this
relationship can be explained by 
\begin{equation}
    \label{eq:gbm_example}
    y_{\mc,\, i} = \beta_0 + \beta_1 S_i + \beta_2 S_i^2 + \beta_3 S_i^3 + \xi_i
\end{equation}
where $y_{\mathrm{\mc},\, i}$ is the MC price, $S_i$ is the spot price
of the equity, and $\xi_i$ is the error on the $i^{\mathrm{th}}$ scenario.

Once the preliminary $5,000 \times 30$ simulation is done, we have 5,000 MC
price estimates and 5,000 spot values that we use to build regression model
\eqref{eq:gbm_example} to obtain coefficient estimates $\{\hat{\beta}_{i}\}_{i =
0}^3$. These coefficient estimates are then used to obtain LSMC price estimates
given by
\begin{equation*}
    \hat{y}_{\mathrm{\mc},\, i} = \hat{\beta}_0 + \hat{\beta}_1 S_i + \hat{\beta}_2 S_i^2 +
    \hat{\beta}_3 S_i^3.
\end{equation*}

Figure \ref{fig:gbm_example} compares the LSMC price estimates to the
Black-Scholes analytic prices. The graphs are provided for time steps 15 and
345. In order to obtain the ``hockey stick'' shape near maturity of the option,
we incorporate a dummy variable
    \[d_i = \mathbb{I}\{S_i \geq K\},\quad \text{where $K$ is the strike price}\]
into equation \eqref{eq:gbm_example} to yield
\begin{equation}
    \label{eq:gbm_example_dummy}
    y_{\mathrm{\mc},\, i} = \beta_0 + \beta_1 S_i + \beta_2 S_i^2 + \beta_3 S_i^3 + d_i +
    \beta_4 d_i S_i + \beta_5 d_i S_i^2 + \beta_6 d_i S_i^3 + \xi_i.
\end{equation}

\begin{figure}[hp]
    \centering
    \caption{GBM Call Example: Model \ref{eq:gbm_example} is used on the left,
    and model \ref{eq:gbm_example_dummy} is used on the right.}
    \includegraphics[width=\textwidth]{./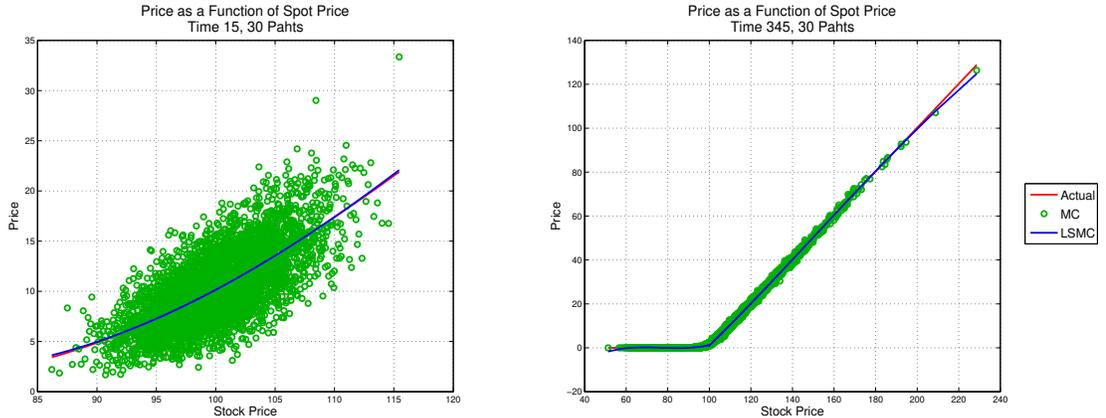}
    \label{fig:gbm_example}
\end{figure}

Figure \ref{fig:gbm_example} reveals that the LSMC price estimates are much
closer to the true analytic prices than the original MC prices.  Interestingly
enough, Barrie and Hibbert's method can be viewed as a variance reduction
technique, though it was not presented as such in their original report. In this
paper we look at the method more closely and present a variance formulation that
could be used to evaluate the accuracy of price estimates. We also look at the
PFE and EE profiles for a set of exotic instruments to test the accuracy of this
method. 

\section{Least Squares Monte-Carlo} 
\label{sec:Least_Squares_Monte-Carlo}
We assume, at outset, that $n$ Monte-Carlo scenarios have been generated over
$k$ time points in the set $\mathcal{T} := \{t_1,\, t_2,\, \dots,\, t_k\}$,
where $t_i < t_j$, for $i < j$. These time points may represent important dates
such as payment dates, fixing dates, etc. At any time $t \in
\mathcal{T}$, we have $n$ vectors $\mathbf{x}_{t,\, i}$, for $i = 1,\, 2\,
\dots,\, n$, representing the cross sectional information obtained from the
scenarios at time $t$ and absorbing all the information up to time $t$. These
vectors form the explanatory variables for the regression model for time $t
\in \mathcal{T}$. 

Suppose that we are interested in obtaining the time $t$ price of an exotic
instrument maturing at time $T$. If $f(\mathbf{x}_{t,\, i})$ denotes the
discounted payoff of this instrument, then the time $t$ price is obtained under
the conditional expectation

\begin{equation} 
    \label{eq:price} 
    y_i := \E^\Q \Big[ f(\mathbf{x}_{t,\, i}) \, \Big|\, \mathcal{F}_t \Big] .
\end{equation}

One can estimate the integral on the right-hand-size of equation
\eqref{eq:price} using a MC method. Doing so, will generate a sequence of MC
estimates $y_{\mc,\, i}$, $i = 1,\, 2,\, \dots,\, n$ at each time
$t \in \mathcal{T}$. These prices form the corresponding set of response
variables for our regression model. We can write the relationship as
\begin{equation}
    \label{eq:mc_approx}
    y_{\mc,\, i} = \sum_{j=0}^m \beta_j\, b_j(\mathbf{x}_{t,\, i}) + \xi_{i} 
\end{equation}
where $\{\beta_{j}\}_j^m$ are coefficients of expansion (as in equation
\eqref{eq:gbm_example})) and $\{b_j(\mathbf{x}_{t,\, i}\}_1^m$ are basis functions (as
$1,\, S_i,\, S_i^{2} \dots$ in equation \eqref{eq:gbm_example}). We obtain MC
prices by computing the mean of $p$ generated discounted payoffs in the inner loop.
Specifically, 
\begin{equation}
    \label{eq:mc_price}
    y_{\mc,\, i} = \frac{1}{p} \sum_{k = 1}^{p}{ f_{k}(\mathbf{x}_{t,\, i})}, 
\end{equation}
where $\mathbf{x}_{t,\, i}$ is generated under risk neutral measure, and
$f_k(\cdot)$ is the value of the discounted payoff function on the $k^{th}$ path.
The Monte-Carlo price can represented as the sum of the true value $y_i$ and a
Monte-Carlo error $\xi_{\mc,\, i}$, or
\begin{equation} \label{eq:mcerr}
    y_{\mc,\, i} = y_i + \xi_{\mc,\, i} \quad.
\end{equation}

Now, suppose we let $p \to \infty$ in equation \eqref{eq:mc_approx}, then 
\begin{align}
    \lim_{p \to \infty} y_{\mc, \,i}  
    & = \lim_{p \to \infty}\sum_{j=0}^m \beta_j\, b_j(\mathbf{x}_{t,\, i}) + \xi_{i} \nonumber \\
    & = \sum_{j=0}^m \beta_j\, b_j(\mathbf{x}_{t,\, i}) + \lim_{p \to \infty} \xi_{i} \quad.  \label{eq:lim} 
\end{align}
As the number of paths tends to infinity, $\xi_{i}$ will approach
$\xi_{\mathrm{d},\, i}$.  \footnote{$\lim_{p \to \infty}{y_{\mc,\, i}} =
y_i$ by the Strong Law of Large Numbers. } In case of a perfect model,
$\xi_{\mathrm{d},\, i}$ equals zero. Generally, $\xi_{\mathrm{d},\, i}$ is
$\mathcal{F}_t$ measurable (deterministic) and is often a function of
$\mathbf{x}_{t,\, i}$. From equation \eqref{eq:lim}, we conclude
that
\begin{equation}
    y_{i} = \sum_{j=0}^m \beta_j\, b_j(\mathbf{x}_{t,\, i}) + \xi_{\mathrm{d},\, i}.
    \label{eq:mc_approxl} 
\end{equation}
Substituting the right-hand-side of equation \eqref{eq:mc_approxl} into equation
\eqref{eq:mcerr} yields the final model
\begin{equation*}
    \label{eq:final_model}
    y_{\mc,\, i} = \sum_{j=0}^m \beta_j\, b_j(\mathbf{X}_{t,\, i}) +
    \xi_{\mathrm{d},\, i}  + \xi_{\mc,\, i} \quad.
\end{equation*}
The total error $\xi_{i}$ from equation \eqref{eq:mc_approx} can be viewed as
the sum of deterministic part $\xi_{\mathrm{d},\, i}$ and a random part
$\xi_{\mc,\, i}$. In practice, we do not observe $\xi_{\mathrm{d},\, i}$
directly nor can accurately measure $\xi_{\mc,\, i}$ unless we use high
number of paths. However, we can estimate the variance of $\xi_{\mc\, i}$ using
the discounted payoff. 

For sufficiently large number of paths, $\mathbf{y}_\mc$ follows
normal distribution with mean $\bf{0}$ and $\bf{\Sigma}$. The non-biased
estimator for $\boldsymbol{\Sigma}_{\mc}$ then is given by
\begin{equation}
    \label{eq:var_est}
    [\hat{\boldsymbol \Sigma}_{\mc}]_{ij} = \frac{1}{p(p - 1)} \sum_{k =
    1}^{p}(f_k(\mathbf{X}_{T,\, i}) - y_{\mc,\,
i})(f_k(\mathbf{X}_{T,\, j}) - y_{\mc,\, j}).
\end{equation}
In case $\boldsymbol \Sigma_{\mc} \equiv \sigma_{\mc}^{2}
\mathbf{I}$, the MC variance estimator becomes
\begin{equation*}
    \hat{\sigma}_{\mc}^2 = \frac{1}{n} \sum_{j =
    1}^{n}[\hat{\boldsymbol \Sigma}_{\mc}]_{jj}\,.
\end{equation*}

In a matrix notation, we can write the regression model \eqref{eq:mc_approx} as
\begin{align} 
    \label{eq:reg_mc} 
    \mathbf{y}_{\mc} &= \mathbf{X} \boldsymbol{\beta} +
    \boldsymbol{\xi}.
\end{align} 
then, the coefficient estimator becomes 
$$\hat{\boldsymbol{\beta}} =
(\mathbf{X^TX})^{-1} \mathbf{X^Ty_{\mc}},$$ 
and 
$$\hat{\bf{y}}_{\mc} = \bf{H}
\bf{y}_{\mc}$$  
with 
$$\bf{H} = \bf{X}(\mathbf{X^TX})^{-1} \mathbf{X^T}.$$
$\bf{H}$ is also known as an orthogonal projection. Conquently, the total 
variance of $\hat{\bf{y}}_\mc$ is lower than the total variance of $\bf{y}_\mc$.
This is given by the following theorem and is proven in appendix
\ref{sec:Linear_Regression}.

\begin{thm}
    \label{thm:var_red}
    Let $\mathbf{Y}: \Omega \rightarrow \mathbb{R}^m$ be a random vector having
    a finite variance, and let $\mathbf{H}$ be an orthogonal projection onto a
    linear subspace of $\mathbb{R}^m$. Then,
    \begin{equation*}
        \tr(\cov(\mathbf{HY})) \leq \tr(\cov(\mathbf{Y})),
    \end{equation*}
    where $\tr(\cdot)$ denotes the trace operator. 
\end{thm}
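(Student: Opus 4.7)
The plan is to translate the trace inequality into a statement about traces of products of positive semi-definite matrices, then invoke a standard fact about such products. Let $\boldsymbol{\Sigma} := \cov(\mathbf{Y})$, which is symmetric and positive semi-definite by construction.

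First I would use the linearity of covariance to write $\cov(\mathbf{HY}) = \mathbf{H}\boldsymbol{\Sigma}\mathbf{H}^T$. Since $\mathbf{H}$ is an orthogonal projection, it is symmetric ($\mathbf{H}^T = \mathbf{H}$) and idempotent ($\mathbf{H}^2 = \mathbf{H}$), so $\cov(\mathbf{HY}) = \mathbf{H}\boldsymbol{\Sigma}\mathbf{H}$. Applying the cyclic property of the trace and idempotence,
\begin{equation*}
    \tr(\cov(\mathbf{HY})) = \tr(\mathbf{H}\boldsymbol{\Sigma}\mathbf{H}) = \tr(\mathbf{H}^2 \boldsymbol{\Sigma}) = \tr(\mathbf{H}\boldsymbol{\Sigma}).
\end{equation*}
Thus the claim reduces to $\tr(\mathbf{H}\boldsymbol{\Sigma}) \leq \tr(\boldsymbol{\Sigma})$, equivalently $\tr\bigl((\mathbf{I} - \mathbf{H})\boldsymbol{\Sigma}\bigr) \geq 0$.

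Next I would observe that $\mathbf{I} - \mathbf{H}$ is itself an orthogonal projection (onto the orthogonal complement of the range of $\mathbf{H}$), hence symmetric and positive semi-definite. So the remaining task is the standard lemma: if $\mathbf{A}, \mathbf{B}$ are symmetric positive semi-definite, then $\tr(\mathbf{AB}) \geq 0$. I would prove this in one line by factoring $\boldsymbol{\Sigma} = \mathbf{C}^T \mathbf{C}$ (e.g.\ via a symmetric square root) and writing
\begin{equation*}
    \tr\bigl((\mathbf{I} - \mathbf{H})\boldsymbol{\Sigma}\bigr) = \tr\bigl((\mathbf{I} - \mathbf{H})\mathbf{C}^T \mathbf{C}\bigr) = \tr\bigl(\mathbf{C}(\mathbf{I} - \mathbf{H})\mathbf{C}^T\bigr) \geq 0,
\end{equation*}
since $\mathbf{C}(\mathbf{I} - \mathbf{H})\mathbf{C}^T = \mathbf{C}(\mathbf{I} - \mathbf{H})^2 \mathbf{C}^T = \bigl(\mathbf{C}(\mathbf{I} - \mathbf{H})\bigr)\bigl(\mathbf{C}(\mathbf{I} - \mathbf{H})\bigr)^T$ is a Gram matrix and hence has non-negative diagonal entries. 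Chaining the equalities then yields the theorem.

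There is no real obstacle here; the only mildly subtle point is remembering to exploit symmetry \emph{and} idempotence of $\mathbf{H}$ together so that $\mathbf{H}\boldsymbol{\Sigma}\mathbf{H}$ collapses to $\mathbf{H}\boldsymbol{\Sigma}$ inside the trace, which is what makes the PSD-product argument available. An alternative formulation, which I would mention only if space permits, is to note directly that the diagonal entries of $\mathbf{H}\boldsymbol{\Sigma}\mathbf{H}$ are $\var\bigl((\mathbf{HY})_i\bigr) = \|\mathbf{h}_i\|^2 \cdot \text{(projected variance)}$ with $\|\mathbf{h}_i\|^2 \leq 1$ since $\mathbf{H}$ is a contraction, but the trace-of-PSD-product route is cleaner and less computational.
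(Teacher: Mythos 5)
Your proof is correct, but it takes a different route from the paper's. The paper decomposes $\mathbf{Y} = \mathbf{HY} + (\mathbf{I}-\mathbf{H})\mathbf{Y}$, expands $\tr(\cov(\mathbf{Y}))$ into four terms, kills the cross terms via $(\mathbf{I}-\mathbf{H})\mathbf{H} = \mathbf{0}$ under the trace, and arrives at the Pythagorean identity $\tr(\cov(\mathbf{Y})) = \tr(\cov(\mathbf{HY})) + \tr(\cov((\mathbf{I}-\mathbf{H})\mathbf{Y}))$, dropping the non-negative residual term. You instead collapse $\tr(\mathbf{H}\boldsymbol{\Sigma}\mathbf{H})$ to $\tr(\mathbf{H}\boldsymbol{\Sigma})$ by cyclicity and idempotence, and reduce everything to the lemma that $\tr(\mathbf{AB}) \geq 0$ for symmetric positive semi-definite $\mathbf{A},\mathbf{B}$, proved by a Gram-matrix factorization. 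The two arguments are cousins --- your quantity $\tr((\mathbf{I}-\mathbf{H})\boldsymbol{\Sigma})$ equals the paper's $\tr(\cov((\mathbf{I}-\mathbf{H})\mathbf{Y}))$ --- but the emphasis differs. The paper's version makes the statistical ANOVA-style decomposition (explained variance plus residual variance) explicit, which fits its narrative of LSMC as a smoothing/variance-reduction device. Your version isolates a reusable PSD-trace lemma and, as a bonus, the intermediate identity $\tr(\cov(\mathbf{HY})) = \tr(\mathbf{H}\boldsymbol{\Sigma})$ immediately yields the ratio $\rank(\mathbf{X})/n$ quoted in Section \ref{sec:Least_Squares_Monte-Carlo} when $\boldsymbol{\Sigma} = \sigma^2\mathbf{I}$, since then $\tr(\mathbf{H}\boldsymbol{\Sigma}) = \sigma^2\tr(\mathbf{H}) = \sigma^2\rank(\mathbf{H})$.

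One caution on your closing aside: the suggested ``alternative formulation'' via diagonal entries does not work. The diagonal entries of $\mathbf{H}\boldsymbol{\Sigma}\mathbf{H}$ are \emph{not} dominated entrywise by those of $\boldsymbol{\Sigma}$ in general (take $\boldsymbol{\Sigma} = \diag(1,0)$ and $\mathbf{H}$ the projection onto the span of $(1,1)^{\mathbf{T}}/\sqrt{2}$; then the second diagonal entry of $\mathbf{H}\boldsymbol{\Sigma}\mathbf{H}$ is $1/4 > 0$). Only the trace inequality holds, so the PSD-product argument you actually used is the one to keep.
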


By Theorem \ref{thm:var_red}, we know that the total variance reduction is
expected under an orthogonal projection
\begin{equation}
    \label{eq:lsmc_var}
    \cov(\hat{\mathbf{Y}}_{\mc}) = \cov(\mathbf{HY}_{\mc}) =
    \mathbf{H}\cov(\mathbf{Y}_{\mc})\mathbf{H}
\end{equation}
where $\cov(\mathbf{Y}_{\mc}) = \boldsymbol{\Sigma}_{\mc}$
represents the covariance matrix of $\boldsymbol{\xi}_{\mc}$ that is estimated
using the discounted payoffs generated in the inner loop. Furthermore,
when the covariance matrix is equal to $\Sigma = \sigma  \bf{I}$  for some
constant $\sigma$ the ratio of total LSMC variance and total MC variance is
given by
\begin{align*}
    \frac{\tr(\cov(\hat{\bf{y}}_{\mc}))} {\tr(\cov(\bf{y}_{\mc}))} &=
    \frac{\rank(\bf{X})}{n} \\
    &= \frac{m}{n}
\end{align*}
which follows from the properties of $\bf{H}$. 

\section{Illustrative Results} 
\label{sec:result}
In this section, we test the accuracy of the LSMC methodology.  An Arithmetic
Asian Option is used to check the fit of various degree polynomials.
After we obtain a good fit, we look at the EE and PFE profiles of a Barrier
Option, Target Accrual Redemption Note, Accumulating Forward Contract and Asian
Option. For the reader's convenience, we describe the payoffs of these
instruments in appendix \ref{app:def_inst}. 

\subsection{LSMC Analysis}
\label{subsec:lsmc_analysis}
Consider an Arithmetic Asian Put Option that matures at time $T$ with $s$ time
fixings $\{t_i\}_{i = 1}^{s}$ and weights $\{w_i\}_{i = 1}^{s}$.  The payoff at
maturity takes the form 
\begin{equation}
    \text{payoff} = (K  - \sum_{i = 1}^{s} { w_i S_{t_i} })^{+},\,
    \text{ $0 \leq t_i \leq T\quad \forall i$ }, 
    \label{eq:asian_option}
\end{equation}
where $S_{t_{i}}$ is the price of the underlying equity at time $t_i$. For
simplicity, assume that $S$ follows a GBM with constant drift and volatility in
the outer and inner loops.\footnote{The drift in the outer loop is set at 0.1,
while the drift in the inner loop is set at 0.05.} Let the observation time be
$t_k$ for $1 < k \leq s$.  We consider orthogonal Forsythe polynomials
\cite{forsythe_poly} for basis functions using $B_{t_k} = \sum_{i}^k w_i  S_i$
as the explanatory variable in the regression model. We perform the following
steps to obtain LSMC estimates:
\begin{itemize}
    \item Transform the explanatory variables to take values between $-1$ and
        $1$ using
        \begin{equation*}
            B_{t_k,\, j}^{*} := \frac{2 B_{t_k,\, j} - \max(\mathbf{B}_{t_k}) -
            \min(\mathbf{B}_{t_k})}{\max(\mathbf{B}_{t_k}) - \min(\mathbf{B}_{t_k})} 
        \end{equation*}
    \item Obtain regression matrix $\mathbf{X}$ using Forsythe polynomial
        expansion \nocite{forsythe_poly}
    \item Estimate the coefficients $\boldsymbol{\beta}$ in model \eqref{eq:reg_mc}
    \item Use the coefficient estimates to obtain LSMC prices
\end{itemize}
We set a numerical experiment with 1, 10, 30, 50, 100 and 10,000 inner paths
to compute and regress $\mathbf{Y}_{\mc}$ against $\mathbf{X}$ to get
$\hat{\mathbf{Y}}_{\mathrm{\mc}}$. To study the accuracy of the fit and
variance reduction, we compare LSMC estimates to MC prices obtained using
$131,072$ Sobol paths. The results are summarized in Figures\ref{fig:fit}
\ref{fig:mc_var} and \ref{fig:lsmc_red}.

\begin{figure}[hp]
    \centering
    \captionsetup{justification=centering}
    \caption{MC vs. LSMC Price: \\ First time step fit using polynomial of
    degree five}
    \includegraphics[width=\textwidth]{./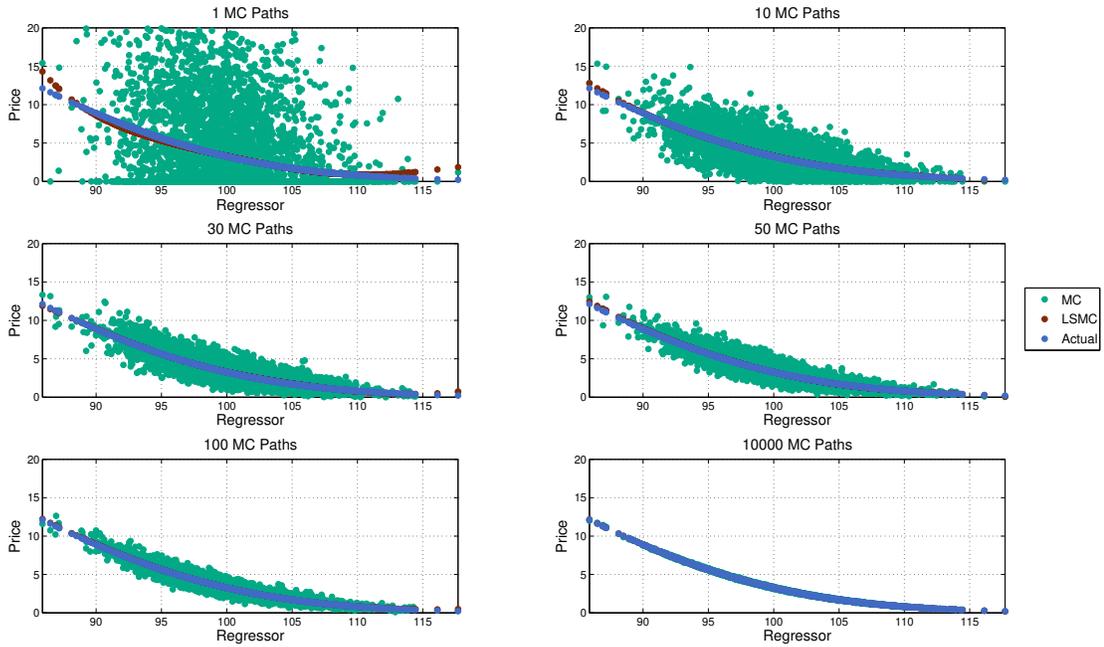}
    \label{fig:fit}
\end{figure}
Note that fully estimated covariance matrix was used to obtain the ratio of
total of MC variance and total of LSMC variance. As one can see, the ratio of
total of MC to LSMC variance is close to $\mathrm{rank}(\mathbf{X}) / n  = 6 /
5000 = 0.0012$. That is, using regression, we are able to reduce the
variance by about 99.9\%.

\begin{figure}
    \centering
    \captionsetup{justification=centering}
    \caption{Monte-Carlo Variance Density: \\First time step}
    \includegraphics[width=0.75\textwidth]{./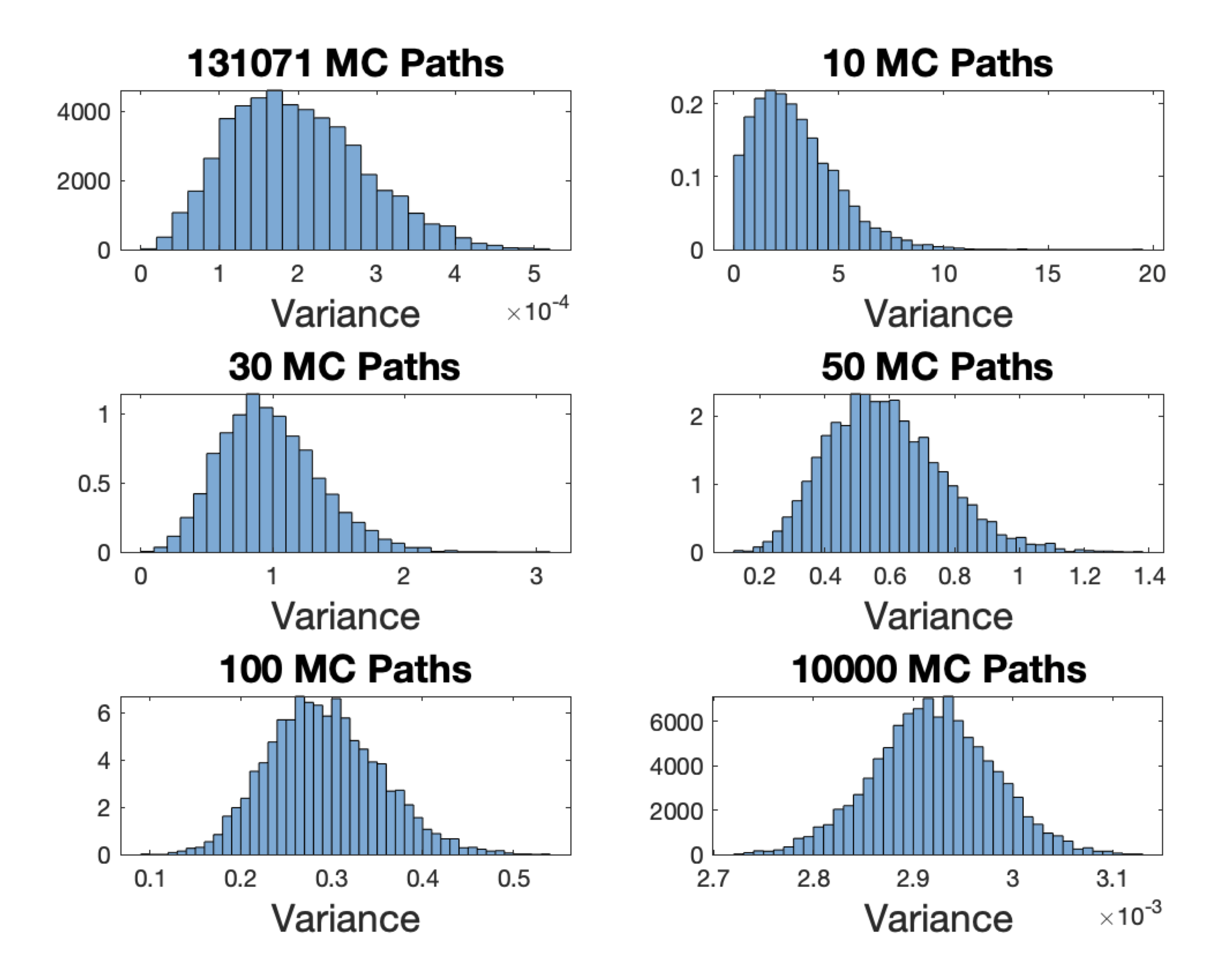}
    \label{fig:mc_var}
\end{figure}

From Figure \ref{fig:fit}, one may infer that we can obtain very accurate price
estimates using 10 paths per outer scenario. Then, assuming that a typical
Nested Monte-Carlo requires 5,000 inner paths to get a reasonable PFE and EE
profile, we can speed up the computation by a factor of 500 using LSMC.
Furthermore, one can argue that a higher polynomial degree could provide a
better fit. Though this may be true when one is dealing with a closed-form
functional approximation, high-degree polynomials tend to overfit data with noise.
These results are most evident in Figure \ref{fig:sse}, where we plot the sum of
squared deterministic residuals (SSE) $\hat{\boldsymbol{\xi}}_{\mathrm{d}}$.

\begin{figure}[hp]
    \centering
    \captionsetup{justification=centering}
    \caption{LSMC Variance Density: \\First time step fit using polynomial of
    degree five}
    \includegraphics[width=0.6\textwidth]{./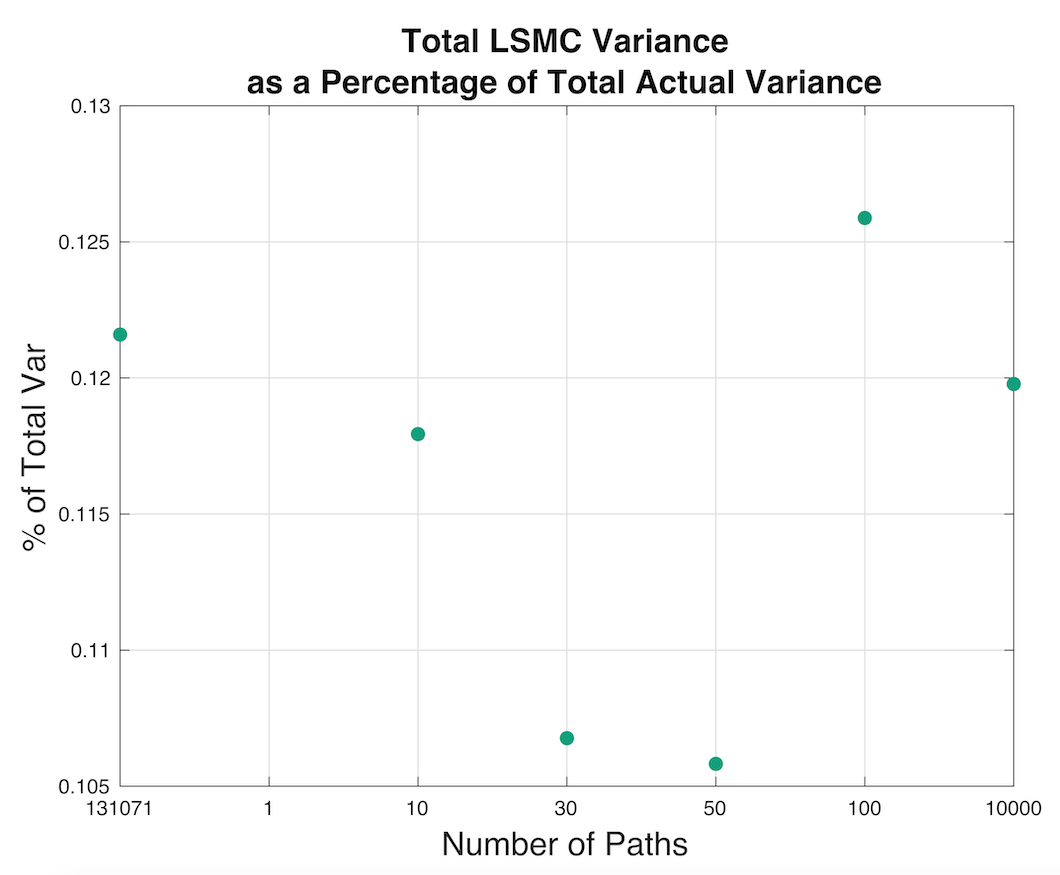}
    \label{fig:lsmc_red}
\end{figure}

Let $\tilde{\mathbf{Y}}$, $\hat{\tilde{\mathbf{Y}}}$,
$\mathbf{Y}_{\mc}$, $\hat{\mathbf{Y}}_{\mc}$, be the Actual
value, the Actual fit, MC price and LSMC price, respectively. Figure
\ref{fig:sse} compares SSE values computed with
\begin{align*}
    \tilde{\boldsymbol{\xi}}_{\mathrm{d}} &= \tilde{\mathbf{Y}} - \hat{\tilde{\mathbf{Y}}},\quad \text{and}   \\
    \hat {\boldsymbol{\xi}}_{\mathrm{d}} &= \mathbf{Y}_{\mc} -
    \tilde{\mathbf{Y}}.
\end{align*}
\begin{landscape}\centering
\vspace*{\fill}
\begin{figure}[hp]
    \captionsetup{justification=centering}
    \caption{Actual vs. LSMC Sum of Squared Errors:\\
     First time step fit with varying degree polynomials}
    \centering\
    \includegraphics[width=\hsize]{./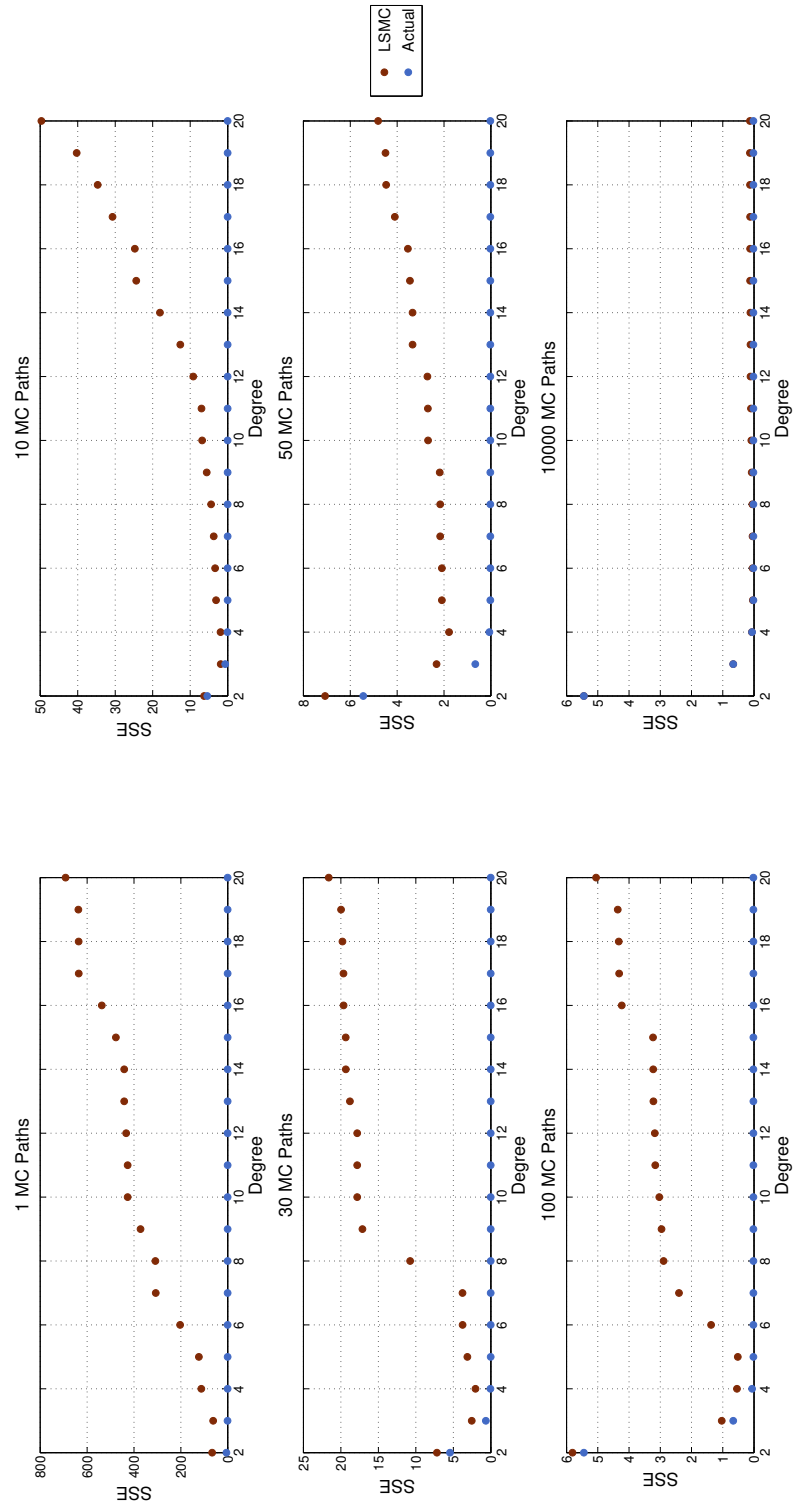}
    \label{fig:sse}
\end{figure}
\vfill
\end{landscape}
We define Actual fit as the least squares fit of price estimates computed using
a large number ($131,072$) of Sobol paths. The SSE of Actual fit that has been
computed using $\tilde{\boldsymbol{\xi}}_\mc$ has a very small (quasi) MC error and
serves as a measure of accuracy. 
From Figure \ref{fig:sse}, we observe monotonic
decrease of SSE when we select higher degree
polynomials for the Actual fit. This does not happen when we fit data with
noise. In general, the less accurate are the price estimates, the sooner SSE
starts to increase. That is, higher degree polynomials tend to overfit data with
noise. With that in mind, we use the third degree polynomial for PFE and EE
estimation: it appears to be the highest degree at which SSE does not increase
in the graphs presented.

\subsection{PFE and EE Results}
\label{sec:pfe_results}
For PFE and EE computations, we use a joint Hull-White two-factor model
for interest rates and Heston stochastic volatility model for FX and
Equity to generate outer, real scenarios. We use different models for pricing,
which vary depending on the instrument. Table \ref{tbl:rn_models} summarizes
the risk-neutral models used in the inner loop for each instrument. For
computational efficiency, with no loss of generality, all instruments mature in
one year and have fixing dates set at 15 day intervals. The profiles obtained
for PFE and EE per instrument are summarized in Figures \ref{fig:PFE} and
\ref{fig:EE}, respectively. 

\begin{table}
\centering
\caption{Risk-neutral models used for pricing of instruments}
\begin{tabular}{lll}
\toprule
\textbf{Instrument} & \textbf{Model}           & \textbf{Underlying} \\
\midrule
Asian   & Hull-White-Black-Scholes  & Foreign Exchange \\
Forward & Geometric Brownian Motion & Equity          \\
Barrier & Hull-White-Black-Scholes  & Foreign Exchange \\
TARN    & Libor Market Model        & Libor Rate       \\
\bottomrule
\end{tabular}
\label{tbl:rn_models}
\end{table}

\begin{figure}
    \captionsetup{justification=centering}
    \caption{Actual vs. LSMC PFE Profile at 95th Percentile}
    \centering
    \includegraphics[width=\textwidth]{./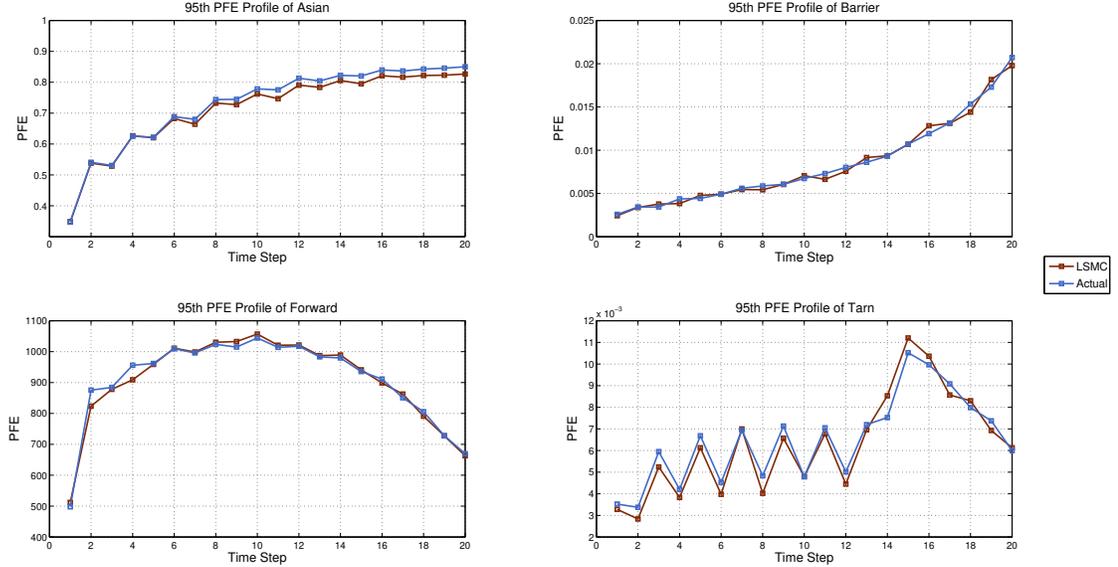}
    \label{fig:PFE}
\end{figure}
\begin{figure}
    \captionsetup{justification=centering}
    \caption{Actual vs. LSMC EE Profile}
    \centering
    \includegraphics[width=\textwidth]{./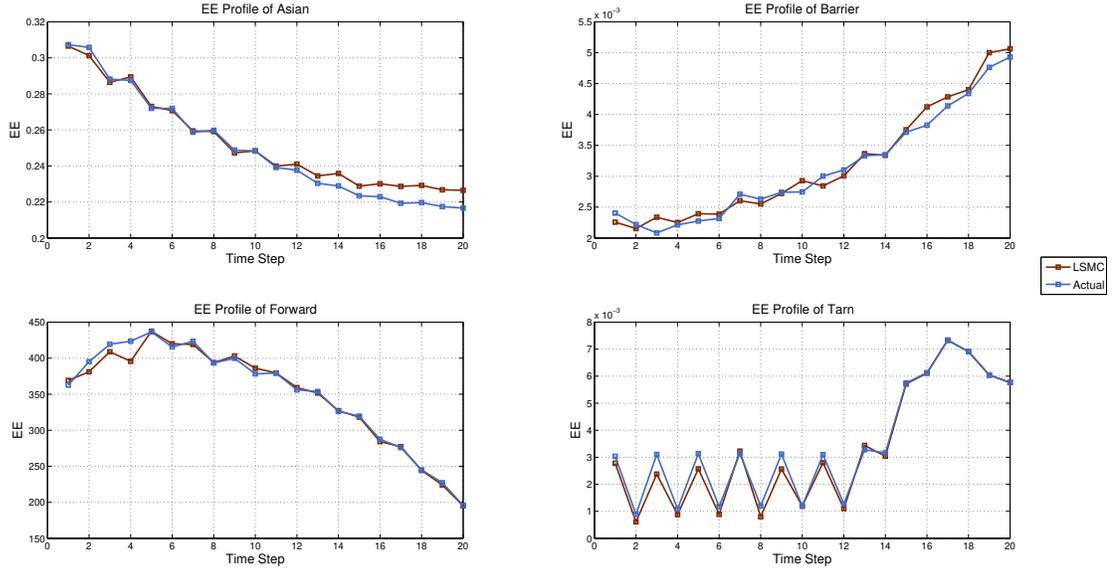}
    \label{fig:EE}
\end{figure}

The PFE and EE profiles for the Actual data series in Figures \ref{fig:PFE} and
\ref{fig:EE} are obtained using 4096 Sobol paths in the inner loop. For the
LSMC PFE and EE profiles, we used varying numbers of paths that were determined
by numerical experiments. We found that smooth instruments such as Asian,
TARN and Accumulative Forward require smaller numbers of paths. In most cases, we
were able to obtain fairly accurate price estimates using anywhere between 10 
and 30 inner paths. To get accurate LSMC Barrier price estimates, we used
anywhere between 30 and 64 inner paths. On average, we were able to speed up the
computation of PFE and EE profiles by a factor of 60. 

\section{Conclusion}\label{sec:conclusion}
In this paper, we were able to show that the least squares Monte-Carlo approach
proposed by Barrie and Hibbert, can capture the tails of the distribution
with a high degree of accuracy. This method, however, does call for numerical
experiments for model calibration where an appropriate number of inner paths and
basis functions need to be selected. Though we were able to capture all
instrument prices with third degree polynomial basis functions, one should exercise
care and diligence when selecting a higher degree. As mentioned previously,
high-degree polynomials may overfit data with noise, which would result in worse
price estimates.

In this paper, we have also proved that the total variance of the least squares
estimates is no greater than the total variance of the original MC estimates.
Indeed, this result should be anticipated since the LSMC method can be viewed as
a smoothing method. Our current results suggest that LSMC should perform better
in aggregate risk metrics such as CVA and/or CVA sensitivity, which is a topic
for a future study. 

\section*{Acknowledgement}
We express special thanks to Ian Isoce for checking the main result and the time
spent in numerous discussions. We also thank Raymond Lee for helping to produce
numerical results, and Alejandra Premat for suggesting credit risk metrics. 

\clearpage
\renewcommand{\theequation}{\thesection.\arabic{equation}}
\appendix

\section{Least Squares Linear Regression} \label{sec:Linear_Regression}
In this section we review the linear regression model and a reduction of
variance theorem. Linear regression modelling and estimation techniques can be
found in many linear regression textbooks such as \cite{seber_lee_lin_reg}.  We
start with a linear regression model of the form 
\begin{equation}
  \label{eq:reg}
  \mathbf{Y} = \mathbf{X} \boldsymbol{\beta} +
  \boldsymbol{\xi}
\end{equation}
with a response variable $\mathbf{Y}$, regression variables $\mathbf{X}$ and
error terms $\boldsymbol{\xi}$. The least squares estimator
$\hat{\boldsymbol{\beta}}$ is the solution to 
\begin{equation}
    \label{eq:min_problem}
    \underset{\boldsymbol{\beta} \in \mathbb{R}^{n}}{\min} || \mathbf{Y} -
    \mathbf{X}\boldsymbol{\beta} ||^{2},
\end{equation}
where $||\mathbf{u}||$ is $L_2$-norm of vector $\mathbf{u}$. The solution to
problem \eqref{eq:min_problem} is given by $\hat{\boldsymbol{\beta}} =
(\mathbf{X^T X})^{-1} \mathbf{X^TY}$, and the estimator of $\mathbf{Y}$ is 
\begin{align}
  \hat{\mathbf{Y}} &= \mathbf{X (X^{T}X)^{-1} X^T Y} \nonumber\\
  &\equiv  \mathbf{H Y}, \label{eq:hat}
\end{align}
provided that $\mathbf{X^TX}$ is invertible.

$\mathbf{H}$, commonly known as the hat matrix, maps the response variables
$\mathbf{Y}$ onto $\hat{\mathbf{Y}} = \mathbf{X} \hat{\boldsymbol{\beta}}$ by
minimizing the sum of squared errors
$\boldsymbol{\xi}^{\mathbf{T}}\boldsymbol{\xi}$.  $\mathbf{H}$ is also the
orthogonal projection onto the column space of $\mathbf{X}$. Clearly,
$\mathbf{H}$ is symmetric and idempotent. 

We can also look at the problem \eqref{eq:reg} from a statistical point of view
and compute the variance of our mean estimator $\hat {\mathbf{Y}}$.
Suppose $\boldsymbol{\xi}$ is a random vector with finite variance, then the
covariance of the mean response $\hat {\mathbf{Y}}$ is given by
\begin{align} 
    \cov(\hat{\mathbf{Y}}) &= \cov(\mathbf{HY}) \nonumber\\
    &= \mathbf{H}\cov(\mathbf{Y})\mathbf{H} \nonumber\\
    &= \mathbf{H} \mathbf{\Sigma} \mathbf{H}    \label{eq:var_mean}, 
\end{align}
where $\mathbf{\Sigma}$ is the covariance matrix of $\boldsymbol{\xi}$.
Furthermore, we can conclude that the total variance of the mean response
estimator $\hat{\mathbf{Y}}$ is less than or equal to the total variance of the
original vector $\mathbf{Y}$ by the following theorem. 
\begin{thm*}
    Let $\mathbf{Y}: \Omega \rightarrow \mathbb{R}^m$ be a random vector having
    a finite variance, and let $\mathbf{H}$ be an orthogonal projection onto a
    linear subspace of $\mathbb{R}^m$. Then,
    \begin{equation*}
        \tr(\cov(\mathbf{HY})) \leq \tr(\cov(\mathbf{Y})),
    \end{equation*}
    where $\tr(\cdot)$ denotes the trace operator. 
\end{thm*}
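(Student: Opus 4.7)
The plan is to exploit the two defining algebraic properties of an orthogonal projection, namely $\mathbf{H}^{T}=\mathbf{H}$ and $\mathbf{H}^{2}=\mathbf{H}$, together with the cyclic invariance of the trace and the positive semi-definiteness of any covariance matrix. Write $\boldsymbol{\Sigma}=\cov(\mathbf{Y})$. First I would apply the bilinearity of covariance to obtain
\begin{equation*}
\cov(\mathbf{HY})=\mathbf{H}\,\boldsymbol{\Sigma}\,\mathbf{H}^{T}=\mathbf{H}\,\boldsymbol{\Sigma}\,\mathbf{H},
\end{equation*}
exactly as in equation \eqref{eq:var_mean} in the appendix.

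Next I would take traces and push everything onto a single projection factor: by the cyclic property,
\begin{equation*}
\tr(\mathbf{H}\boldsymbol{\Sigma}\mathbf{H}) \;=\; \tr(\mathbf{H}^{2}\boldsymbol{\Sigma}) \;=\; \tr(\mathbf{H}\boldsymbol{\Sigma}),
\end{equation*}
where the idempotency of $\mathbf{H}$ was used in the last step. The target inequality is therefore equivalent to $\tr\bigl((\mathbf{I}-\mathbf{H})\boldsymbol{\Sigma}\bigr)\ge 0$.

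Finally I would dispose of this by noting that $\mathbf{I}-\mathbf{H}$ is itself an orthogonal projection (onto the orthogonal complement of the range of $\mathbf{H}$), hence symmetric and idempotent. One more application of cyclicity and idempotency rewrites the quantity as
\begin{equation*}
\tr\bigl((\mathbf{I}-\mathbf{H})\boldsymbol{\Sigma}\bigr) \;=\; \tr\bigl((\mathbf{I}-\mathbf{H})\boldsymbol{\Sigma}(\mathbf{I}-\mathbf{H})\bigr).
\end{equation*}
Since $\boldsymbol{\Sigma}$ admits a symmetric square root $\boldsymbol{\Sigma}^{1/2}$, the matrix on the right equals $\mathbf{M}^{T}\mathbf{M}$ with $\mathbf{M}=\boldsymbol{\Sigma}^{1/2}(\mathbf{I}-\mathbf{H})$, whose trace is the sum of squared Euclidean norms of its columns and is therefore non-negative. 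Combining the three steps yields $\tr(\cov(\mathbf{HY}))\le\tr(\cov(\mathbf{Y}))$, as required.

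There is no real obstacle here; the only thing requiring any thought is choosing the cleanest of several equivalent endings (one could alternatively invoke the fact that the trace of the product of two positive semi-definite matrices is non-negative, or work in an eigenbasis of $\mathbf{H}$ where the coordinates outside the range of $\mathbf{H}$ are simply discarded). I would prefer the $\mathbf{M}^{T}\mathbf{M}$ formulation since it keeps the proof self-contained and elementary.
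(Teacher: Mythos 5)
Your proof is correct and rests on exactly the same ingredients as the paper's: symmetry and idempotency of $\mathbf{H}$, cyclicity of the trace, and the non-negativity of the trace of the complementary piece $(\mathbf{I}-\mathbf{H})\boldsymbol{\Sigma}(\mathbf{I}-\mathbf{H})$. The paper reaches that same final term by decomposing the random vector as $\mathbf{Y}=\mathbf{HY}+(\mathbf{I}-\mathbf{H})\mathbf{Y}$ and expanding the covariance of the sum (recognizing the leftover as $\tr(\cov((\mathbf{I}-\mathbf{H})\mathbf{Y}))\ge 0$ directly), whereas you work purely on $\boldsymbol{\Sigma}$ via $\tr(\mathbf{H}\boldsymbol{\Sigma}\mathbf{H})=\tr(\mathbf{H}\boldsymbol{\Sigma})$ and a square-root factorization; the difference is presentational rather than substantive.
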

\begin{proof}
    Expressing $\mathbf{Y} = \mathbf{H Y} + \mathbf{(I - H) Y}$, then 
    \begin{align*}
        \tr\big(\cov(\mathbf{Y})\big) &= \tr\big( \cov( \mathbf{HY + (I - H)Y} ) \big)\\
        & = \tr\big( \mathbf{H} \cov(\mathbf{Y}) \mathbf{H} + \mathbf{(I - H)}\cov(\mathbf{Y})\mathbf{(I - H)} + \mathbf{H} \cov(\mathbf{Y}) \mathbf{(I - H)} + \mathbf{(I - H)}\cov(\mathbf{Y}) \mathbf{H} \big)\\
        & = \tr\big( \mathbf{H} \cov(\mathbf{Y}) \mathbf{H}\big) +
        \tr\big(\mathbf{(I - H)}\cov(\mathbf{Y})\mathbf{(I - H)}\big) \\
        & \qquad \quad + \tr\big(\mathbf{H} \cov(\mathbf{Y})
        \mathbf{(I - H)}\big) + \tr\big(\mathbf{(I - H)}\cov(\mathbf{Y}) \mathbf{H} \big)\\
        & = \tr\big(  \cov(\mathbf{HY}) \big) + \tr\big( \cov(\mathbf{(I - H)Y}) \big) + 2 \tr \big( \mathbf{H} \cov(\mathbf{Y}) \mathbf{(I - H)} \big) \\
        & = \tr\big(  \cov(\mathbf{HY}) \big) + \tr\big( \cov(\mathbf{(I - H)Y}) \big) + 2 \tr \big( \cov(\mathbf{Y}) \mathbf{(I - H)} \mathbf{H} \big) \\
        & = \tr\big(  \cov(\mathbf{HY}) \big) + \tr\big( \cov(\mathbf{(I - H)Y}) \big) + \boldsymbol{0}\\
        & \geq \tr\big(\cov(\mathbf{HY})\big)
    \end{align*}
    by properties of $\mathbf{H}$ and $\tr(\cdot)$ operator.
\end{proof}
\vfill
\section{Definition of Instruments}\label{app:def_inst}

\subsection{Notation}
\begin{itemize}[leftmargin=2cm]
    \item $S_t$ denotes the price of the underlying at time $t$.
    \item $\tau$ denotes a stopping time.
    \item $\tau \wedge t :=  \min (\tau,\, t)$.
    \item $T$ is maturity of the instrument unless otherwise stated.
    \item $T_i < T_j \leq T \quad \forall\, i < j$.
    \item $T_0 = 0$.
\end{itemize}

\subsection{Payoff Functions}
\begin{itemize}[leftmargin=2cm]
\item K-Strike Arithmetic Asian Option with weights $\{w_i\}_{i = 1}^{n}$:
    \begin{align*}
        \text{put} &= (K  - \sum_{i = 1}^{n} { w_i S_{T_i} })^{+} \\
        \text{call} &= (\sum_{i = 1}^{n} { w_i S_{T_i} } - K)^{+}
    \end{align*}
\item  K-Strike Up-and-Out Barrier Option with barrier $B > S_0$ and rebate $R$:
    \begin{align*}
        \text{put} &= (K  - S_{T})^{+}\, \mathbb{I}\{\tau > T\} +
        R\, \mathbb{I}\{\tau \leq T\} \\
        \text{call} &= (S_{T} - K)^{+}\, \mathbb{I}\{\tau > T\} +
        R\, \mathbb{I}\{\tau \leq T\}\\
    \end{align*}
    where $\tau = \min\{t : S_t \geq B\}$. The rebate is paid at time $\tau$ and
    $(S_{T} - K)^{+}$ is paid at maturity.
\item Accumulating Forward Contract with n-payments and barrier $B$:
    \begin{align*}
        \text{short}  &= \sum_{i = 1}^{n}(\alpha |U_i| + \beta
        |D_i|)(K - S_{T_i \wedge \tau}) \\
        \text{long}  &= \sum_{i = 1}^{n}(\alpha |U_i| + \beta
        |D_i|)(S_{T_i \wedge \tau} - K) \\
    \end{align*}
    where
    \[
         \begin{array}{rl}
            U_i &:= \{S_{\tau \wedge t} > K : T_{i - 1} < (\tau \wedge t) < T_i\} \\
            D_i &:= \{S_{\tau \wedge t} \leq K : T_{i - 1} < (\tau \wedge t) < T_i\} \\
            \alpha,\, \beta &\in \mathbb{R}\\
            |A|& \text{denotes the cardinality of set $A$}
        \end{array}
    \]
\item Target Accrual Redemption Note with barrier $B$ on accrual payments, LIBOR
    rate $S$ and fixed rates $K_1$ and $K_2$:
    \begin{align*}
        \text{receiver} &= \sum_{i = 1}^{n}( K_1 - S_{T_i}\,
        \mathbb{I}\{\tau > T_i\} - K_2\, \mathbb{I}\{\tau \leq T_i\}) \\
        \text{payer} &= \sum_{i = 1}^{n}(  S_{T_i}\,
        \mathbb{I}\{\tau > T_i\} + K_2\, \mathbb{I}\{\tau \leq T_i\} - K_1)
    \end{align*}
    where $\tau = \min \{T_i : \sum_{i = 1}^n S_{T_i} \geq B \}$.
\end{itemize}

\newpage
\bibliographystyle{unsrt}

\end{document}